\documentclass[a4paper]{article}
\usepackage{CJK}
\usepackage[paperwidth=185mm,paperheight=230mm,textheight=190mm,textwidth=145mm,left=20mm,right=20mm, top=25mm, bottom=20mm]{geometry}
\usepackage[CJKbookmarks, colorlinks, bookmarksnumbered=true,pdfstartview=FitH,linkcolor=blue,citecolor=green]{hyperref}
\usepackage{amsmath,amssymb}
\usepackage{amsthm}
\usepackage{calc}
\usepackage{graphicx}
\usepackage{supertabular}
\usepackage{longtable}
\usepackage{float}
\usepackage{color}
\usepackage{enumerate}
\usepackage{colortbl,booktabs}
\pagestyle{plain}

\usepackage{natbib}
\usepackage{authblk}
\usepackage{multirow}
\usepackage{float}

\newtheorem{theorem}{Theorem}
\newtheorem{lemma}{Lemma}

\newtheorem{remark}{Remark}

\def\wh{\widehat}
\def\wt{\widetilde}
\def\cov{\hbox{cov}}

\begin{document}

\title{ Functional principal component analysis estimator for non-Gaussian data }
\author[a]{{\fontsize{12pt}{18pt}\selectfont Rou Zhong}}
\author[a]{{\fontsize{12pt}{0.5em}\selectfont Shishi Liu}}
\author[b]{{\fontsize{12pt}{0.5em}\selectfont Haocheng Li}}
\author[a]{{\fontsize{12pt}{0.5em}\selectfont Jingxiao Zhang} \thanks{zhjxiaoruc@163.com}}
\affil[a]{{\emph\fontsize{12pt}{0.5em}\selectfont Center for Applied Statistics, School of Statistics, Renmin University of China}}
\affil[b]{{\emph\fontsize{12pt}{0.5em}\selectfont Department of Mathematics and Statistics, University of Calgary}}
\date{}
\maketitle

\begin{abstract}

Functional principal component analysis (FPCA) could become invalid when data involve non-Gaussian features. Therefore, we aim to develop a general FPCA method to adapt to such non-Gaussian cases. A Kenall's $\tau$ function, which possesses identical eigenfunctions as covariance function, is constructed. The particular formulation of Kendall's $\tau$ function makes it less insensitive to data distribution. We further apply it to the estimation of FPCA and study the corresponding asymptotic consistency. Moreover, the effectiveness of the proposed method is demonstrated through a comprehensive simulation study and an application to the physical activity data collected by a wearable accelerometer monitor.

\textbf{Keywords}: Functional data analysis, functional principal component analysis, non-Gaussian data, Kendall's $\tau$ function
\end{abstract}

\section{ Introduction }

With the rapid development of measurement tools and the advance of data collection technology, functional data become ubiquitous in various fields. Many representative monographs, such as \citet{ref1} and \citet{hsing2015theoretical} among others, provide comprehensive introduction on functional data analysis (FDA).

Functional principal component analysis (FPCA), as a crucial technique in FDA, is widely used in reducing dimensionality and exploring the major variation modes of sample curves.
There is an extensive literature on the work of FPCA. To list a few,
\citet{ref12} studied penalized spline models for FPCA. \citet{ref15} proposed a localized FPCA method to gain eigenfunctions with localized support regions. A dynamic FPCA approach was developed by \citet{ref13} with regard to functional time series. Moreover, \citet{chiou2014multivariate} and \citet{ref20} considered FPCA for multivariate functional data. Though a variety of FPCA approaches have been proposed, most of the existing FPCA methods involve inferences on the covariance structure. However, as pointed out in \citet{kraus2012dispersion}, the covariance structure inferences would lead to biased outcomes for non-Gaussian observations.

Meanwhile, non-Gaussian continuous data frequently occur in practice. For example, the physical activity data \citep{keadle}, which motivated our study, are shown to be non-Gaussian. The data were obtained from $63$ moderately overweight but healthy office workers in a health research project to increase exercise activities and reduce sedentary behaviors.
As shown in Section \ref{app}, the measurements bear some skewed characteristics,
which imply a deviation from Gaussian distribution.
The goal of this article is to develop adaptive FPCA estimators for such non-Gaussian cases.

Recently, the study of FPCA for non-Gaussian data has attracted increasing attention.
\citet{ref4} introduced a latent Gaussian process model for non-Gaussian longitudinal data that measured sparsely and irregularly. \citet{ref5} discussed a Bayesian FPCA approach for data in single-parameter exponential family. \citet{ref6} considered a generalized additive mixed model (GAMM) framework, and \citet{ref7} proposed an exponential family FPCA method. Nevertheless, all of these studies were developed based on the assumption that data are generated from a specified distribution. These methods may particularly work for discrete data (e.g. binary and count data), whereas not appropriate for the data we have at hand, as it is difficult to assume a proper distribution for the physical activity records.

In this paper, we propose a novel FPCA method, called Kendall FPCA, to handle such non-Gaussian measurements without any specific distributional assumption. We first introduce a Kendall's $\tau$ function, which can be seen as a functional generalization of the multivariate Kendall's $\tau$ matrix in \citep{han2018eca}. It can be further demonstrated that the Kendall's $\tau$ function shares identical eigenfunction with covariance function. Moreover, as rank information is involved in Kendall's $\tau$ function, the estimation is unlikely to be affected by data distribution. Consequently, more general estimate for eigenfunction can be achieved through the Kendall's $\tau$ function estimate. The computation is straightforward and quite easy to implement. Additionally, we further explore the asymptotic properties of our method.

The remainder of the article is organized as follows. In Section \ref{method}, we introduce the proposed FPCA method and further explore its theoretical properties. Section \ref{sim} reports the results from simulation studies. Section \ref{app} illustrates the application of our method to the physical activity data. Concluding discussions are provided in Section \ref{discussion}.

\section{Methodology}\label{method}

\subsection{Kendall FPCA}

Let $X(\cdot)$ denote a smooth random function in $L^2(\mathcal{T})$, with unknown mean function $E\{X(t)\} = \mu(t)$ and covariance function $\Gamma(s, t) = \cov\{X(s), X(t)\}$, where $\mathcal{T}$ is a bounded and closed interval. Define $\Gamma$ as a covariance operator that $(\Gamma \phi)(t) = \int \Gamma(s, t)\phi(s)ds$, with $\phi$ being an arbitrary function.
A commonly used approach to implement FPCA is through eigenanalysis of $\Gamma(s, t)$ \citep{ref1}.
In specific, the eigenequations,
\begin{align}
(\Gamma \phi_k)(t) = \lambda_k \phi_k(t), k=1, 2, \ldots, \nonumber
\end{align}
are considered, where $\phi_k$ and $\lambda_k$ are the $k$-th eigenfunction and eigenvalue respectively.
Further, according to the Karhunen-Lo\`eve theorem, $X(t)$ can be expanded by the eigenfunctions as follows:
\begin{align}\label{kl}
X(t) = \mu(t) + \sum_{k=1}^{\infty}\xi_k\phi_k(t),
\end{align}
where $\xi_k$ is the $k$-th functional principal component score with mean zero and variance $\lambda_k$.
Covariance function performs a crucial role in FPCA.
However, as the estimate of covariance function would become implausible when facing non-Gaussian cases \citep{kraus2012dispersion}, an appropriate alternative is demanded.

To this end, we construct a Kendall's $\tau$ function as follows:
\begin{align}
K(s, t) = E\Big [ \frac{\{X(s) - \wt{X}(s)\}\{X(t) - \wt{X}(t)\}}{\|X - \wt{X}\|^2} \Big ], \nonumber
\end{align}
where $\wt{X}$ is an independent copy of $X$, and $\|X - \wt{X}\|^2 = \int_{\mathcal{T}} \{X(t) - \wt{X}(t)\}^2 dt$. It is a functional generalization of the multivariate Kendall's $\tau$ matrix in \citep{han2018eca}. For the appealing properties of Kendall's $\tau$ function, it is shown in Lemma \ref{LemmaKend} that the eigenfunction of $K(s, t)$ is exactly the same with that of covariance function $\Gamma(s, t)$.

\begin{lemma}\label{LemmaKend}
The $k$-th eigenfunction $\phi_k$ for covariance function $\Gamma(s, t)$ is identical to the $k$-th eigenfunction for Kendall's $\tau$ function $K(s, t)$.
\end{lemma}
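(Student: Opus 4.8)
The plan is to pass to Karhunen--Lo\`eve coordinates and show that $K$ is diagonalized by the same basis $\{\phi_k\}$. Let $\wt X$ have scores $\wt\xi_k$; the mean cancels in the difference, so $X(t)-\wt X(t)=\sum_{k\ge1}\zeta_k\phi_k(t)$ with $\zeta_k:=\xi_k-\wt\xi_k$, and by orthonormality of the $\phi_k$ we get $\|X-\wt X\|^2=\sum_l\zeta_l^2$. Because $|\zeta_j\zeta_k|\le\tfrac12\sum_l\zeta_l^2$, the integrand is bounded, so Fubini lets me interchange the expectation with $\int_{\mathcal T}(\cdot)\phi_k(s)\,ds$ and compute
\begin{align}
(K\phi_k)(t)=E\Big[\frac{\zeta_k\{X(t)-\wt X(t)\}}{\|X-\wt X\|^2}\Big]=\sum_{j\ge1}c_{jk}\phi_j(t),\qquad c_{jk}:=E\Big[\frac{\zeta_j\zeta_k}{\sum_l\zeta_l^2}\Big].\nonumber
\end{align}
The lemma then reduces to two claims: (i) $c_{jk}=0$ for $j\ne k$, whence each $\phi_k$ is an eigenfunction of $K$ with eigenvalue $c_{kk}$; and (ii) the $c_{kk}$ are ordered in the same way as the $\lambda_k$, so that the indexing of the eigenfunctions agrees.

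The vanishing of the cross terms in (i) is the crux, and it is exactly where a distributional assumption must enter: I expect this to be the main obstacle. Marginal symmetry of each $\zeta_k$ does not suffice (the joint law of $(\zeta_k)$, whose characteristic function is $|\psi|^2$, need not be invariant under flipping a single coordinate); what I really need is invariance of the joint law of $(\zeta_k)$ under a single-coordinate sign flip $\zeta_j\mapsto-\zeta_j$. The natural setting securing this --- and the one implicit in the multivariate Kendall's $\tau$ of \citet{han2018eca} --- is that $X$ is an elliptical process, so that $X-\wt X\stackrel{d}{=}R\sum_k\sqrt{\lambda_k}\,U_k\phi_k$ with $U=(U_k)$ uniform on the unit sphere and $R\ge0$ an independent radius. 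Then $\zeta_k=R\sqrt{\lambda_k}\,U_k$, the factor $R$ cancels between numerator and denominator, and $c_{jk}=\sqrt{\lambda_j\lambda_k}\,E[U_jU_k/\sum_l\lambda_lU_l^2]$. For $j\ne k$ the reflection $U_j\mapsto-U_j$ preserves the uniform law on the sphere, negates the integrand and fixes the denominator, forcing $c_{jk}=0$. (Equivalently, under independent symmetric scores one conditions on the magnitudes $|\zeta_l|$ and uses $E[\mathrm{sign}(\zeta_j)\,\mathrm{sign}(\zeta_k)]=0$.) Pinning down the precise assumption on $X$ under which these cross terms die is the substantive point, and it is what the rank-type normalization of $K$ is designed to exploit.

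For (ii) I would show $\lambda_j\ge\lambda_k\Rightarrow c_{jj}\ge c_{kk}$. Writing $c_{jj}-c_{kk}=E\big[(\lambda_jU_j^2-\lambda_kU_k^2)/\sum_l\lambda_lU_l^2\big]$ and using the exchangeability of the coordinates of $U$, I symmetrize by also swapping the roles of $U_j$ and $U_k$; averaging the two representations and setting $D:=\sum_{l\ne j,k}\lambda_lU_l^2$ collapses the numerator to $(\lambda_j-\lambda_k)\{2(\lambda_j+\lambda_k)U_j^2U_k^2+(U_j^2+U_k^2)D\}$ over a strictly positive denominator. Since the braced factor is nonnegative (and positive with probability one), $c_{jj}-c_{kk}$ carries the sign of $\lambda_j-\lambda_k$, so the ordering is preserved and ties occur only when $\lambda_j=\lambda_k$. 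Combining (i) and (ii) yields $K(s,t)=\sum_k c_{kk}\,\phi_k(s)\phi_k(t)$ with $c_{11}\ge c_{22}\ge\cdots\ge0$, so $K$ and $\Gamma$ share the same ordered eigenfunctions, which is the assertion. The only remaining routine points are non-degeneracy, so that $\|X-\wt X\|>0$ almost surely, and the Fubini justification already noted.
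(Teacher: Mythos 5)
Your proposal is correct, and in substance it is the same argument as the one the paper relies on. Note that the paper never proves Lemma~\ref{LemmaKend} inline; it defers entirely to \citet{zhong2021robust}, and that proof (like its multivariate antecedent in \citet{han2018eca}, whose ``ECA'' stands for \emph{elliptical} component analysis) runs exactly along your lines: pass to the Karhunen--Lo\`eve score differences $\zeta_k$, invoke the elliptical stochastic representation so that the radial part cancels in the ratio, kill the off-diagonal coefficients $c_{jk}$ by a single-coordinate reflection, and show by symmetrization that the diagonal coefficients inherit the ordering of the $\lambda_k$'s. Your algebra checks out; in particular the collapsed numerator $(\lambda_j-\lambda_k)\{2(\lambda_j+\lambda_k)U_j^2U_k^2+(U_j^2+U_k^2)D\}$ over a positive denominator is correct, and it gives the strict ordering $c_{jj}>c_{kk}$ whenever $\lambda_j>\lambda_k$, so the indexing of eigenfunctions agrees up to the usual ambiguity at ties.

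Two remarks. First, the ``obstacle'' you flag is real and correctly diagnosed: as printed, the lemma carries no distributional assumption, yet without something like ellipticity --- i.e.\ invariance of the joint law of $(\zeta_k)$ under single-coordinate sign flips --- the cross terms $c_{jk}$ need not vanish and the statement fails in general (global symmetry $\zeta\stackrel{d}{=}-\zeta$, which is automatic for a difference of i.i.d.\ copies, is not enough, exactly as you say). The assumption is imposed in the cited reference, where $X$ is taken to be an elliptical process; your proof makes explicit what the paper leaves implicit. Second, a minor technical repair: in $L^2(\mathcal{T})$ there is no uniform distribution on the unit sphere, so the representation $X-\wt{X}\stackrel{d}{=}R\sum_k\sqrt{\lambda_k}\,U_k\phi_k$ should be replaced by the Gaussian scale-mixture form of an infinite-dimensional elliptical process, $X-\wt{X}\stackrel{d}{=}RZ$ with $Z=\sum_k\sqrt{\lambda_k}\,W_k\phi_k$, the $W_k$ i.i.d.\ standard normal and $R\geq 0$ an independent radius. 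Since the only properties you use are single-coordinate sign-flip invariance and exchangeability of the $j$-th and $k$-th coordinates, both of which hold for i.i.d.\ normals, your reflection and symmetrization arguments go through verbatim with $W_k$ in place of $U_k$.
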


The proof of Lemma \ref{LemmaKend} can be referred to \citep{zhong2021robust}.
Consequently, eigenfunctions $\phi_k$'s satisfy the eigenequations
\begin{align}
\int_{\mathcal{T}} K(s, t) \phi_k(s) ds = \lambda_k^{\star} \phi_k(t), k = 1, 2, \ldots, \nonumber
\end{align}
where $\lambda_k^{\star}$ is the $k$-th eigenvalue of $K(s, t)$.

The proposed Kendall FPCA is based on eigenanalysis of the Kendall's $\tau$ function. In terms of the formulation of $K(s, t)$, it involves the comparison of two independent random processes, therefore contains some rank information and helps to resist negative affect of the skewed data. Moreover, $X - \wt{X}$ is projected onto a sphere with radius one, which would depress the influence of heavy-tailed data. These characteristics of Kendall's $\tau$ function alleviate its dependence on strict distributional assumptions.

\subsection{Estimation}

Suppose we observe a set of independent random curves $\{X_i(t): i=1, \ldots, N\}$ from $N$ individuals. The estimated Kendall's $\tau$ function is given by
\begin{align}\label{whK}
\wh{K}(s,t) = \frac{2}{N(N-1)} \sum_{i<j} \frac{ \{ X_i(s) - X_j(s) \} \{ X_i(t) - X_j(t) \} }{ \| X_i - X_j \|^2},
\end{align}
where $\|X_i-X_j\|^2=\int_{\mathcal{T}}\{X_i(t)-X_j(t)\}^2dt$. The above Kendall's $\tau$ function estimate is an operator-valued U-statistics and its asymptotic properties are discussed in Section \ref{SecTheory}.

Then the estimated eigenfunctions $\wh{\phi}_k$'s are obtained from the following eigenequations
\begin{align}
\int_{\mathcal{T}} \wh{K}(s, t) \wh{\phi}_k(s) ds = \wh{\lambda}_k^{\star} \wh{\phi}_k(t), k = 1, 2, \ldots, \nonumber
\end{align}
where $\wh{\lambda}_k^{\star}$ is the $k$-th eigenvalue estimate of $K(s, t)$. Further, we have
\begin{align}
\wh{\xi}_{ik} = \int_{\mathcal{T}} \{ X_i(t) - \wh{\mu}(t) \} \wh{\phi}_k(t) dt, \nonumber
\end{align}
where $\wh{\xi}_{ik}$ is the $k$-th estimated functional principal component score for the $i$-th individual, and $\wh{\mu}(t) = \sum_{i = 1}^N X_i(t)/N$ is the estimate for mean function. We can estimated the eigenvalues through variance estimate of the estimated functional principal component scores.

In practical, the individual function $X_i(t)$ might not be fully observed. Consider $X_i(t)$'s are recorded at a dense and regular grid of time points $t_1, \ldots, t_d$, then the smoothing technique is employed. The procedure is similar to the discretizing method in \citep{ref1}. We first compute the discretized $\wh{K}(s, t)$ by applying (\ref{whK}) to the $N$ by $d$ matrix of observed data, where $\|X_i-X_j\|^2$ is obtained by numerical integration. Then $\wh{\phi}_k(t)$ is estimated by smoothing the eigenvector of the discretized $\wh{K}(s, t)$, with further normalization such that $\int_{\mathcal{T}} \wh{\phi}_k^2 (t) dt = 1$.

\subsection{Theoretical results}\label{SecTheory}

In this section, we discuss asymptotic properties for both the estimated Kendall's $\tau$ function in (\ref{whK}) and the estimated eigenfunction $\wh{\phi}_k(t)$. As our proof in the Appendix illustrates, $\wh{K}(s, t)$ is an operator-valued U-statistics with kernel size two. Hence the convergence rate of our $\wh{K}(s, t)$ can be achieved via the asymptotic results of operator-valued U-statistics \citep{sriperumbudur2017approximate}.

\begin{theorem}\label{theoryKend}
For the estimated Kendall's $\tau$ function, we have
\begin{align}\label{supKend}
\sup_{s, t \in \mathcal{T}} |\wh{K}(s, t) - K(s, t)| = O_p(N^{-1/2}).
\end{align}
\end{theorem}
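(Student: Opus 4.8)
The plan is to recognize $\wh{K}(s,t)$ as a U-statistic of order two and to extract the parametric rate from its Hoeffding decomposition, the delicate point being the uniformity over $(s,t)$, which is secured by the boundedness of the kernel.

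First I would put the kernel in normalized form. Writing $Y_{ij} = (X_i - X_j)/\|X_i - X_j\|$, each summand of (\ref{whK}) equals $Y_{ij}(s)Y_{ij}(t)$, so that $\wh{K}$ is the U-statistic with symmetric kernel $h(X_i, X_j)(s,t) = Y_{ij}(s)Y_{ij}(t)$ and $K = E\{h(X, \wt{X})\}$, where $Y = (X - \wt{X})/\|X - \wt{X}\|$. The key structural fact is that each $Y_{ij}$ is a unit vector in $L^2(\mathcal{T})$, so the kernel is a rank-one operator of unit Hilbert--Schmidt norm, giving automatic boundedness in the operator norm. Under the smoothness and regularity assumptions on $X(\cdot)$ (uniformly bounded sample paths together with a non-degeneracy condition keeping $\|X - \wt{X}\|$ bounded below) one further obtains $\sup_{s,t}|Y_{ij}(s)Y_{ij}(t)| = \|Y_{ij}\|_\infty^2$ bounded by a fixed constant, so the kernel is bounded in the supremum norm as well.

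Next I would apply the Hoeffding decomposition, writing
\begin{align}
\wh{K}(s,t) - K(s,t) = \frac{2}{N}\sum_{i=1}^{N} \wt{h}_1(X_i)(s,t) + R_N(s,t), \nonumber
\end{align}
where $\wt{h}_1(x)(s,t) = E\{h(x, \wt{X})(s,t)\} - K(s,t)$ is the centered first-order projection and $R_N$ is the degenerate second-order remainder. The linear term is an average of i.i.d., centered, uniformly bounded variables and is therefore $O_p(N^{-1/2})$, whereas the degenerate remainder $R_N$ contributes at the faster rate $O_p(N^{-1})$; the dominant contribution is thus the linear part.

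The main obstacle is controlling the supremum over $(s,t)$ rather than a pointwise or mean-square discrepancy. I would resolve this by invoking the concentration theory for operator-valued U-statistics of \citep{sriperumbudur2017approximate}, whose bounds apply precisely because the kernel $Y_{ij}\otimes Y_{ij}$ is bounded, yielding the rate $N^{-1/2}$ in the relevant norm. To upgrade this to the supremum norm I would exploit the smoothness of the sample paths: for each fixed $(s,t)$ the linear term is an average of bounded variables, so a Bernstein-type inequality furnishes a pointwise exponential tail, while the equicontinuity supplied by smoothness controls the covering numbers of $\mathcal{T}\times\mathcal{T}$, and a standard chaining argument then promotes the pointwise bound to the uniform one without loss in the rate. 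Collecting the two terms yields $\sup_{s,t\in\mathcal{T}}|\wh{K}(s,t) - K(s,t)| = O_p(N^{-1/2})$.
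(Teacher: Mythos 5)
Your proposal has the same skeleton as the paper's own proof: you recognize $\wh{K}$ as an order-two U-statistic with the spatial-sign kernel $S(X_i-X_j)\otimes S(X_i-X_j)$, apply the Hoeffding decomposition, obtain $O_p(N^{-1/2})$ for the linear projection and $O_p(N^{-1})$ for the degenerate remainder, and invoke the concentration machinery for operator-valued U-statistics of \citep{sriperumbudur2017approximate} — this is exactly what the paper does. The one place you genuinely depart is the final step, converting the Hilbert-norm bound into a supremum over $(s,t)$. The paper does this with the single inequality $\|\wh{K}-K\|_\infty \le \|\wh{K}-K\|_2$, i.e., domination of the sup norm by the $L^2(\mathcal{T}\times\mathcal{T})$ norm of the kernel; you instead propose pointwise Bernstein tails plus a chaining/covering argument. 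Two remarks on that. First, your instinct that this step needs real work is sound: a sup norm is not in general dominated by an $L^2$ norm, so the paper's one-line conversion is the weakest link of its argument, and chaining is the standard repair. Second, your repair is not free of gaps either: to get a pointwise bounded kernel you assume uniformly bounded sample paths and, more problematically, that $\|X-\wt{X}\|$ is bounded away from zero. The latter is unnatural — two independent copies of a continuous random function will be arbitrarily close in $L^2$ with positive probability, so $\sup_{s,t}|Y_{ij}(s)Y_{ij}(t)| = \|Y_{ij}\|_\infty^2$ is not almost surely bounded by a fixed constant. To close this you would need to truncate on the small-denominator event and show its contribution is negligible, or obtain equicontinuity of the normalized differences by some other route. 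Since the theorem is stated in the paper with no explicit hypotheses, both arguments are implicitly leaning on unstated regularity at exactly this point; yours at least names the conditions it needs.
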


\begin{remark}
\normalfont Note that we assume the individual function is fully observed here. It is not a strict assumption as we take into account a dense design in this paper. With the increase of observation size, the smoothed individual curves can be quite close to the true curves so that the approximation error is ignorable.
\end{remark}

The following Theorem \ref{theoryeigfun} presents theoretical property of the estimated eigenfunction, which is a direct result from Theorem \ref{theoryKend}.

\begin{theorem}\label{theoryeigfun}
For the estimated eigenfunction, we have
\begin{align}
\sup_{t \in \mathcal{T}} |\wh{\phi}_k(t) - \phi_k(t)| = O_p(N^{-1/2}). \nonumber
\end{align}
\end{theorem}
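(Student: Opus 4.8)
The plan is to derive the eigenfunction bound from the operator bound of Theorem \ref{theoryKend} in two stages: first obtain an $L^2$ bound through classical perturbation theory, then bootstrap to the uniform bound by exploiting the smoothing effect of the integral operator. Throughout I assume the eigenvalues $\lambda_k^\star$ are simple, so that each spectral gap $\delta_k = \min_{j\neq k}|\lambda_k^\star - \lambda_j^\star|$ is strictly positive; this is the standard identifiability condition needed to isolate an individual eigenfunction.

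First I would convert the supremum bound into a Hilbert--Schmidt, and hence operator, norm bound. Since $\mathcal{T}$ is bounded and closed, Theorem \ref{theoryKend} gives
\begin{align}
\|\wh{K} - K\|_{HS}^2 = \int_{\mathcal{T}}\int_{\mathcal{T}} |\wh{K}(s,t) - K(s,t)|^2\, ds\, dt \le |\mathcal{T}|^2 \sup_{s,t}|\wh{K}-K|^2 = O_p(N^{-1}), \nonumber
\end{align}
so that $\|\wh{K}-K\|_{op} \le \|\wh{K}-K\|_{HS} = O_p(N^{-1/2})$. Weyl's inequality then yields the eigenvalue bound $|\wh{\lambda}_k^\star - \lambda_k^\star| \le \|\wh{K}-K\|_{op} = O_p(N^{-1/2})$, and a classical perturbation bound for the eigenprojections of self-adjoint compact operators (the operator form of the Davis--Kahan theorem, or Bosq (2000), Theorem 4.7) gives, after fixing the sign of $\wh{\phi}_k$ so that $\langle \wh{\phi}_k,\phi_k\rangle \ge 0$,
\begin{align}
\|\wh{\phi}_k - \phi_k\|_{L^2} \le \frac{2\sqrt{2}}{\delta_k}\|\wh{K}-K\|_{op} = O_p(N^{-1/2}). \nonumber
\end{align}

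The final and most delicate step is to upgrade this $L^2$ bound to the claimed uniform bound. Both eigenfunctions satisfy their eigenequations, so
\begin{align}
\wh{\phi}_k(t) - \phi_k(t) = \frac{1}{\wh{\lambda}_k^\star}\int_{\mathcal{T}} \wh{K}(s,t)\wh{\phi}_k(s)\, ds - \frac{1}{\lambda_k^\star}\int_{\mathcal{T}} K(s,t)\phi_k(s)\, ds. \nonumber
\end{align}
I would add and subtract intermediate terms to split the right-hand side into three pieces: one involving $\wh{K}-K$ integrated against $\wh{\phi}_k$, one involving $K$ integrated against $\wh{\phi}_k-\phi_k$, and one involving the scalar factor $(\wh{\lambda}_k^\star)^{-1}-(\lambda_k^\star)^{-1}$ multiplying $\lambda_k^\star\phi_k(t)$. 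Taking $\sup_t$ and bounding each integral by Cauchy--Schwarz, the first piece is controlled by $\sup_{s,t}|\wh{K}-K|$, the second by $\sup_{s,t}|K|\cdot\|\wh{\phi}_k-\phi_k\|_{L^2}$, and the third by $|\wh{\lambda}_k^\star-\lambda_k^\star|\cdot\sup_t|\phi_k(t)|$; each is $O_p(N^{-1/2})$ by the bounds already established, using $\|\wh{\phi}_k\|_{L^2}=1$ and the fact that $K$ is bounded since, by Cauchy--Schwarz, its integrand never exceeds one in absolute value. Combining the three pieces delivers $\sup_t|\wh{\phi}_k(t)-\phi_k(t)| = O_p(N^{-1/2})$.

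I expect this bootstrapping step to be the main obstacle, not because any individual estimate is difficult, but because it requires recognizing that an integral operator with a bounded kernel converts $L^2$ control into uniform control; the $L^2$ perturbation theory by itself cannot produce a sup-norm statement. Care must also be taken with the sign indeterminacy of eigenfunctions and with the requirement that $\wh{\lambda}_k^\star$ be bounded away from zero, which holds with probability tending to one since $\lambda_k^\star>0$ and $\wh{\lambda}_k^\star\to\lambda_k^\star$, so that division by the estimated eigenvalue is legitimate.
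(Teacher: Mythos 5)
Your proposal is correct in structure and is, in substance, the same argument the paper relies on: the paper's own proof of this theorem is a one-line citation to the proof of Theorem 2 in \citet{yao2005functional}, and that proof proceeds exactly as you do --- operator-norm perturbation bounds (Bosq/Davis--Kahan type) to get the $L^2$ rate for $\wh{\phi}_k$, followed by the eigenequation identity $\wh{\phi}_k(t)=(\wh{\lambda}_k^{\star})^{-1}\int\wh{K}(s,t)\wh{\phi}_k(s)\,ds$ to bootstrap the $L^2$ rate up to the sup-norm rate. So you have essentially filled in the details that the paper outsources, including the simple-eigenvalue/spectral-gap condition and the sign convention, both of which the paper leaves implicit.

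One justification in your final step is wrong as stated and needs patching. You claim that, by Cauchy--Schwarz, the integrand of $K$, namely $\{X(s)-\wt{X}(s)\}\{X(t)-\wt{X}(t)\}/\|X-\wt{X}\|^2$, never exceeds one in absolute value. That is false: the normalization is by the $L^2$ norm, and a function's pointwise values can far exceed its $L^2$ norm (think of a narrow spike), so the ratio is not pointwise bounded. What Cauchy--Schwarz actually gives (applied to the expectation) is $|K(s,t)|\le\sqrt{K(s,s)K(t,t)}$, so boundedness of $K$ reduces to boundedness of the diagonal $K(t,t)=E[\{X(t)-\wt{X}(t)\}^2/\|X-\wt{X}\|^2]$, which integrates to one over $\mathcal{T}$ but need not be bounded without a regularity condition. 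The fix is cheap: under the paper's standing smoothness assumptions $K$ is continuous on the compact set $\mathcal{T}\times\mathcal{T}$, hence bounded, and some such condition is implicitly needed even for Theorem \ref{theoryKend}'s sup-norm statement to be meaningful; but the bound should be obtained from continuity (or stated as an assumption), not from the false pointwise domination. With that repair, your three-piece decomposition and the resulting $O_p(N^{-1/2})$ sup-norm rate go through.
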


\section{Simulation}\label{sim}

In this section, we conduct a simulation study to demonstrate the performance of our proposed Kendall FPCA under dense design.
We generate the data through Karhunen-Lo\`eve expansion with additional independent measurement errors to mimic the practical cases. In specific, the observed data can be expressed as
\begin{align}
Y_{ij} = \mu(t_j) + \sum_{k=1}^{\infty} \xi_{ik} \phi_k(t_j) + \epsilon_{ij}, \nonumber
\end{align}
where $\epsilon_{ij} \sim \mathcal{N}(0, \sigma^2)$ for $i = 1, \ldots, N, j = 1, \ldots, d$. We set mean function as $\mu(t) = 0, t \in [0, 10]$.
The eigenvalues are $\lambda_1 = 4^2$, $\lambda_2 = 3^2$ and $\lambda_k = 0$ for $k \geq 3$, with the eigenfunctions being as follows:
\begin{itemize}
\item Case 1: $\phi_1(t) = \mbox{cos}(\pi t/10)/\sqrt{5}$ and $\phi_2(t) = \mbox{sin}(\pi t/10)/\sqrt{5}$.
\item Case 2: $\phi_1(t) = \mbox{sin}(\pi t/5)/\sqrt{5}$ and $\phi_2(t) = \mbox{cos}(\pi t/5)/\sqrt{5}$.
\end{itemize}
Moreover, the variance $\sigma^2$ of measurement errors is $0.25$.

Similar to \citep{yao2005functional}, functional principal component score $\xi_{ik}$ is simulated with various distribution settings according to the corresponding eigenvalue $\lambda_k$. We consider Gaussian, mix-Gaussian, EC2 \citep{han2018eca}, skew-t \citep{azzalini2014the} distributions here.
The specific settings for these four distributions are as follows:
\begin{itemize}
\item For Gaussian distribution, $\xi_{ik}$ is sampled from $\mathcal{N}(0, \lambda_k)$.
\item For mix-Gaussian distribution, $\xi_{ik}$ is sampled from $\mathcal{N}(\sqrt{\lambda_k/2}, \lambda_k/2)$ and $\mathcal{N}(-\sqrt{\lambda_k/2}, \lambda_k/2)$ with equal probability.
\item For EC2 distribution, $\xi_{ik} = \sqrt{\lambda_k} \eta_{ik} u_{ik} / \sqrt{2}$, where $\eta_{ik}$ follows the exponential distribution with parameter 1, and $u_{ik}$ is chosen from 1 and -1 with equal probability.
\item For skew-t distribution, besides the mean and variance for $\xi_{ik}$ being set as $0$ and $\lambda_k$, the coefficients of skewness and excess kurtosis are given by $1.5$ and $5.1$ respectively.
\end{itemize}
Furthermore, EC2 distribution is heavy-tailed, while skew-t distribution is skewed.
In the simulation study of $100$ runs, we have $N = 100$ individuals and each subject has $d = 51$ equally-spaced observations.

As a comparison, the covariance-based FPCA approach \citep{ref1} and MLE-based FPCA approach \citep{peng2009a} are also explored. For simplicity, our method and these two methods are denoted as KFPCA, FPCA and MLE respectively.
The following criteria are evaluated to assess the performance of these three methods:
\begin{align}
\mbox{IMSE}_k &= \| \phi_k - \wh{\phi}_k \|^2 = \int_{\mathcal{T}} \{\wh{\phi}_k(t) - \phi_k(t)\}^2 dt, \nonumber \\
\mbox{MSE}_k &= \frac{1}{N} \sum_{i = 1}^N (\wh{\xi}_{ik} - \xi_{ik})^2, \nonumber
\end{align}
where $\mbox{IMSE}_k$ and $\mbox{MSE}_k$ measure the estimation error for the $k$-th eigenfunction and functional principal component score. Specifically, smaller $\mbox{IMSE}_k$ and $\mbox{MSE}_k$ suggest better performance.

Table \ref{simTableCase1} presents the simulation results under the four considered distributions in Case 1.
It is observed that FPCA, MLE and KFPCA methods yield similar results under Gaussian and mix-Gaussian distributions. Nevertheless, when the functional principal component scores turn to follow EC2 and skew-t distributions, it is evident that KFPCA performs much better than FPCA and MLE methods in terms of both $\mbox{IMSE}_k$ and $\mbox{MSE}_k$.
Moreover, $\mbox{IMSE}_k$ and $\mbox{MSE}_k$ of KFPCA under EC2 and skew-t distributions merely display a slight increase when compared with the Gaussian case, while FPCA and MLE methods exhibit an obvious deteriorating trend. The above numerical results indicate that the proposed KFPCA method produces more precise estimates for eigenfunctions and functional principal component scores in the non-Gaussian cases, especially when heavy-tailed and skewed features occur.

The simulation results for Case 2 are reported in Table \ref{simTableCase2}. For Gaussian and mix-Gaussian distributions, FPCA, MLE and KFPCA show slight difference with each other for both $\mbox{IMSE}_k$ and $\mbox{MSE}_k$, as what happens in Case 1. Although all these three methods become poorer for EC2 distribution, the proposed KFPCA provides best estimates for the eigenfunctions and functional principal component scores according to $\mbox{IMSE}_k$ and $\mbox{MSE}_k$. Considering skew-t distribution, KFPCA is much more accurate than FPCA and MLE regardless of $\mbox{IMSE}_k$ or $\mbox{MSE}_k$. Therefore, the effectiveness of our KFPCA method is further demonstrated.

\begin{table}[htbp]
\caption{Simulation results for the averaged $\mbox{IMSE}_1$, $\mbox{IMSE}_2$, $\mbox{MSE}_1$ and $\mbox{MSE}_2$ across $100$ runs in Case 1, with standard deviation in parentheses. The simulation settings and the definition of the $\mbox{IMSE}_1$, $\mbox{IMSE}_2$, $\mbox{MSE}_1$ and $\mbox{MSE}_2$ are illustrated in Section \ref{sim}.}
\label{simTableCase1}
\begin{center}
\begin{tabular}{ccccc}
\hline
 &Method&FPCA&MLE&KFPCA\\
\hline
\multirow{2}{*}{Gaussian}&$\mbox{IMSE}_1$&0.0352 (0.0530)&0.0352 (0.0529)&0.0347 (0.0581)\\
 &$\mbox{IMSE}_2$&0.0331 (0.0496)&0.0331 (0.0495)&0.0327 (0.0546)\\
 &$\mbox{MSE}_1$&0.5157 (0.5646)&0.5154 (0.5642)&0.5099 (0.6100)\\
 &$\mbox{MSE}_2$&0.6111 (0.6666)&0.6105 (0.6653)&0.6055 (0.7317)\\
\hline
\multirow{2}{*}{mix-Gaussian}&$\mbox{IMSE}_1$&0.0356 (0.0627)&0.0355 (0.0625)&0.0386 (0.0637)\\
 &$\mbox{IMSE}_2$&0.0334 (0.0588)&0.0333 (0.0586)&0.0362 (0.0598)\\
 &$\mbox{MSE}_1$&0.5638 (0.6561)&0.5635 (0.6550)&0.5926 (0.6674)\\
 &$\mbox{MSE}_2$&0.6301 (0.8543)&0.6292 (0.8517)&0.6705 (0.8644)\\
\hline
\multirow{2}{*}{EC2}&$\mbox{IMSE}_1$&0.0708 (0.2598)&0.0709 (0.2601)&0.0464 (0.1225)\\
 &$\mbox{IMSE}_2$&0.0686 (0.2580)&0.0687 (0.2583)&0.0440 (0.1184)\\
 &$\mbox{MSE}_1$&0.9455 (3.0307)&0.9464 (3.0333)&0.6476 (1.3408)\\
 &$\mbox{MSE}_2$&1.0149 (2.9945)&1.0162 (2.9988)&0.7304 (1.2933)\\
\hline
\multirow{2}{*}{Skew-t}&$\mbox{IMSE}_1$&0.0962 (0.2629)&0.0962 (0.2626)&0.0628 (0.1366)\\
 &$\mbox{IMSE}_2$&0.0925 (0.2574)&0.0925 (0.2572)&0.0594 (0.1307)\\
 &$\mbox{MSE}_1$&1.2380 (3.3262)&1.2375 (3.3187)&0.8548 (1.4715)\\
 &$\mbox{MSE}_2$&1.2829 (3.3437)&1.2821 (3.3346)&0.9181 (1.6226)\\
\hline
\end{tabular}
\end{center}
\end{table}

\begin{table}[htbp]
\caption{Simulation results for the averaged $\mbox{IMSE}_1$, $\mbox{IMSE}_2$, $\mbox{MSE}_1$ and $\mbox{MSE}_2$ across $100$ runs in Case 2, with standard deviation in parentheses. The simulation settings and the definition of the $\mbox{IMSE}_1$, $\mbox{IMSE}_2$, $\mbox{MSE}_1$ and $\mbox{MSE}_2$ are illustrated in Section \ref{sim}.}
\label{simTableCase2}
\begin{center}
\begin{tabular}{ccccc}
\hline
 &Method&FPCA&MLE&KFPCA\\
\hline
\multirow{2}{*}{Gaussian}&$\mbox{IMSE}_1$&0.0347 (0.0563)&0.0347 (0.0563)&0.0383 (0.0599)\\
 &$\mbox{IMSE}_2$&0.0374 (0.0597)&0.0374 (0.0598)&0.0413 (0.0636)\\
 &$\mbox{MSE}_1$&0.5379 (0.5739)&0.5377 (0.5740)&0.5696 (0.6025)\\
 &$\mbox{MSE}_2$&0.7086 (0.9653)&0.7083 (0.9657)&0.7744 (1.0440)\\
\hline
\multirow{2}{*}{mix-Gaussian}&$\mbox{IMSE}_1$&0.0303 (0.0565)&0.0303 (0.0565)&0.0319 (0.0566)\\
 &$\mbox{IMSE}_2$&0.0327 (0.0599)&0.0327 (0.0598)&0.0345 (0.0600)\\
 &$\mbox{MSE}_1$&0.4977 (0.5735)&0.4978 (0.5731)&0.5098 (0.5691)\\
 &$\mbox{MSE}_2$&0.6061 (0.8631)&0.6065 (0.8626)&0.6302 (0.8581)\\
\hline
\multirow{2}{*}{EC2}&$\mbox{IMSE}_1$&0.1248 (0.3005)&0.1249 (0.3008)&0.0941 (0.2003)\\
 &$\mbox{IMSE}_2$&0.1297 (0.3059)&0.1299 (0.3063)&0.0994 (0.2058)\\
 &$\mbox{MSE}_1$&1.5576 (3.5384)&1.5589 (3.5424)&1.1432 (2.1173)\\
 &$\mbox{MSE}_2$&1.8612 (3.8849)&1.8634 (3.8908)&1.5018 (2.5464)\\
\hline
\multirow{2}{*}{Skew-t}&$\mbox{IMSE}_1$&0.1498 (0.3495)&0.1496 (0.3490)&0.0560 (0.1151)\\
 &$\mbox{IMSE}_2$&0.1552 (0.3557)&0.1549 (0.3552)&0.0599 (0.1205)\\
 &$\mbox{MSE}_1$&1.8790 (4.2276)&1.8765 (4.2225)&0.7577 (1.3502)\\
 &$\mbox{MSE}_2$&2.2009 (4.6310)&2.1977 (4.6253)&0.9938 (1.5749)\\
\hline
\end{tabular}
\end{center}
\end{table}

\section{Real data analysis}\label{app}

In this section, we use the proposed method to analyze the physical activity dataset collected by wearable monitors \citep{keadle}. The raw activity signal is captured by a device named the ActivPAL$^{\rm TM}$ (www.paltech.plus.com). This device is taped in front of the thigh. It uses an accelerometer to measure the angle of the thigh and the movement speed. The raw data recorded by the accelerometer are used to calculate energy expenditure level in metabolic equivalents (METs). In general, physical activity with METs$<3$ can be identified as light activity; otherwise, it is moderate-to-vigorous activity. The dataset has $N=63$ subjects and each participant has $d = 36$ five-minute interval records.
Figure \ref{sample}(a) shows the energy expenditure observation from one subject across three hours. Figure \ref{sample}(b) shows the histogram of the observations from all subjects. Figures \ref{sample}(c)-(f) are histograms and Q-Q plots for the energy expenditure level at two time points. The plots suggest the observations are skewed and has extreme values.

\begin{figure}[H]
\centering
\includegraphics[width = \textwidth]{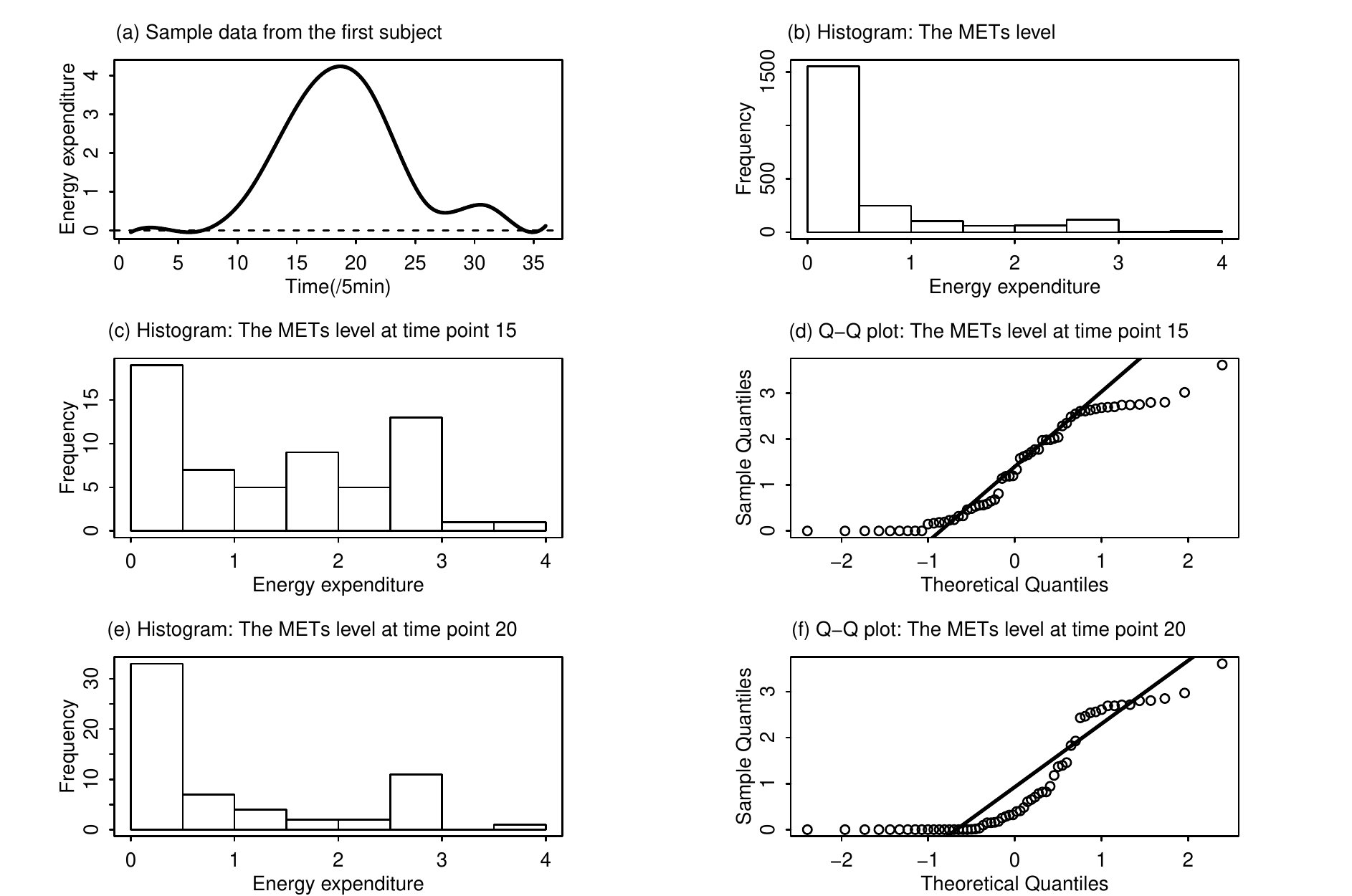}
\caption{ (a) Energy expenditure observation from subject with ID 1 across $3$ hours. (b) Histogram of the energy expenditure observations (METs) from all subjects. (c)(d) Histogram and Q-Q plot for the energy expenditure level (METs) at time point 15. (e)(f) Histogram and Q-Q plot for the energy expenditure level (METs) at time point 20.}
\label{sample}
\end{figure}

In this application, $X_i(t)$ can be defined as the energy expenditure level curve from subject $i$ $(i=1,\cdots,63)$.
Our proposed Kendall FPCA approach is used to analyze the physical activity dataset as the non-Gaussian features, which includes right skewness and extreme values, must be taken into account.
Figure \ref{realdata}(a) illustrates that mean energy expenditure level  slightly increases at $t=1$ and then decreases back to starting level at about $t=5$. The light energy expenditure could be explained as the warming up stage designed in this health research project to reduce the risk of injury. The METs level increases dramatically at about $t=12$, and decreases back to just above the starting level by $t=24$. During this interval, individuals take an intense exercise training for about 30 minutes and then start to cool down for about 25 minutes. The energy expenditure level stays the same after $t=24$, which shows subjects are taking sedentary activities after the exercise program.
Figures \ref{realdata}(b) displays the estimates of the first two eigenfunctions via our Kendall FPCA. The first eigenfunction rises to a peak at about $t=16$, and then decreases to zero. This suggests that main pattern of the variability in energy expenditure is around $t=16$ across different participants. Subjects would have highly different METs level during the intense exercise program because they have various physical training experience. On the other hand, the second eigenfunction has one positive and one negative peaks, which indicates energy expenditure variabilities at $t=15$ and $t=21$ are negatively correlated. That is, if a subject spends higher energy expenditure around $t=15$, he/she may have lower energy expenditure later.

\begin{figure}[H]
\centering
\includegraphics[width = \textwidth]{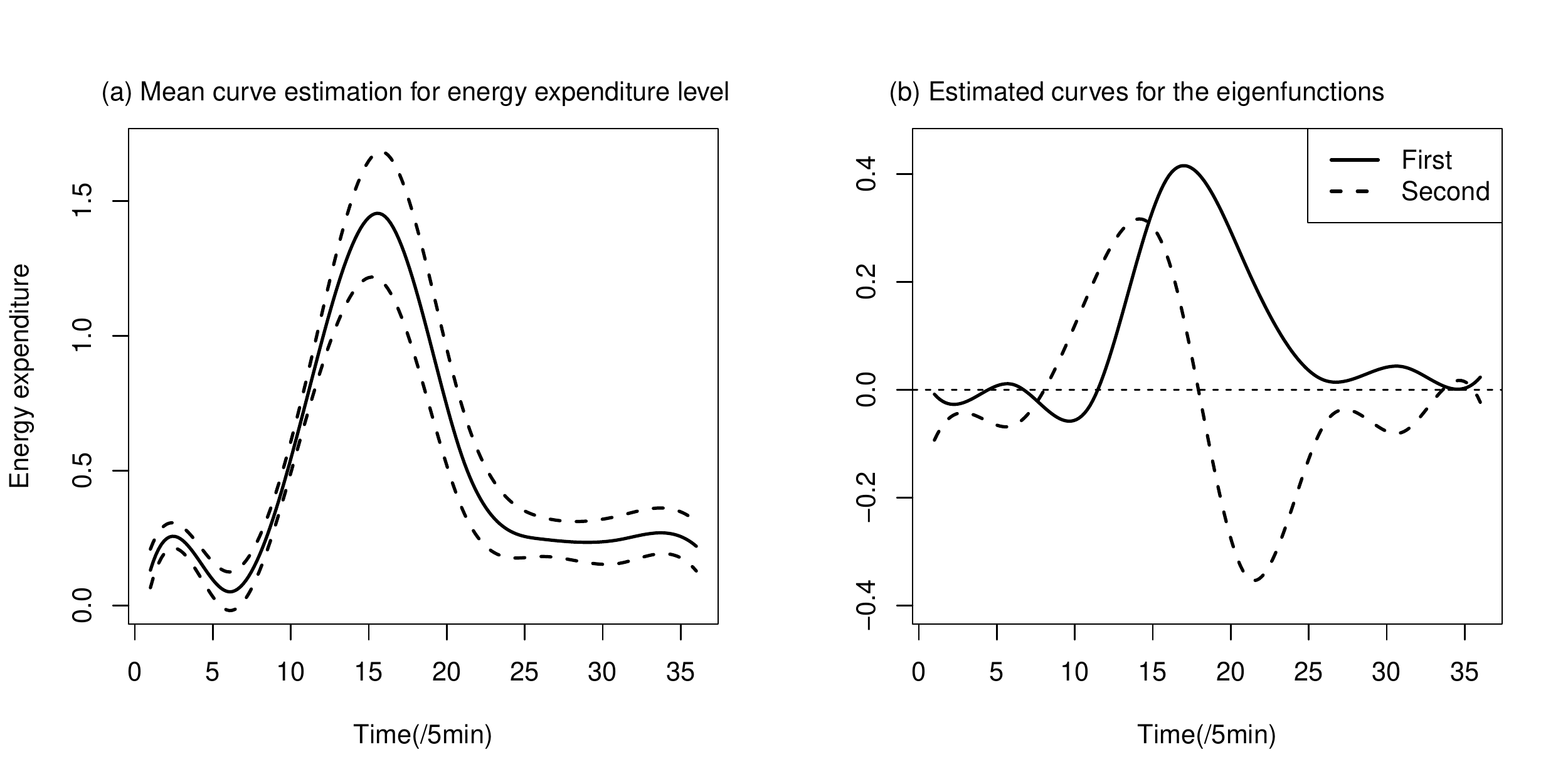}
\caption{FPCA results for the physical activity data illustrated in Section \ref{app}. (a) Mean curve estimation for energy expenditure level. Solid line and dashed line are the estimated curve and its 90\% bootstrap confidence intervals respectively. (b) Estimated curves for the first and the second eigenfunctions, in solid line and dashed line respectively.}
\label{realdata}
\end{figure}

\section{Discussion}\label{discussion}

In this paper, we develop the Kendall FPCA method to analyze non-Gaussian data. A Kendall's $\tau$ function is introduced, whose appealing properties indicate its potential in FPCA, especially for non-Gaussian data. The computation is easy-implemented and we also explore asymptotic properties of our method.
Further simulation study shows encouraging results of our method under both Gaussian and non-Gaussian settings.
Moreover, analysis of the physical activity data demonstrates utility of the developed method in real world data applications.

There are several extensions can be considered for the proposed method.
The functional data are assumed to have regular and dense structure in this paper. It is of interest to generalize the current method to irregular and sparse cases. For example, we may extend our methods to sparse cases by using the local linear estimation \citep{yao2005functional}. Moreover, we only consider continuous variable with non-Gaussian characteristics, it is also worthwhile to discuss FPCA approaches for discrete data, such as binary data and Poisson data.

\section*{Acknowledgements}

The research was supported by Public Health $\&$ Disease Control and Prevention, Major Innovation $\&$ Planning Interdisciplinary Platform for the ``Double-First Class" Initiative, Renmin University of China.
Li was supported by discovery grants program from the Natural Sciences and Engineering Research Council of Canada (NSERC, RGPIN-2015-04409). The authors thank Dr. Sarah Kozey Keadle for making the physical activity data available to them. Dr. Sarah Kozey Keadle was supported by a National Cancer Institute grant (R01-CA121005).

\appendix

\section{Proofs}

\begin{proof}[Proof of Theorem \ref{theoryKend}]
Define the operator corresponds to $K(s, t)$ as Kendall's $\tau$ operator and denote it as $K$. Then $K$ can be written as
\begin{align}
K = E \{S(X - \wt{X}) \otimes S(X - \wt{X})\}, \nonumber
\end{align}
where $S$ is the spatial sign map such that $S(X - \wt{X}) = (X - \wt{X}) / \|X - \wt{X}\|_2$, and $\otimes$ is the tensor product on the functional space $H$. Further, according to (\ref{whK}), our estimate of $K$ is
\begin{align}
\wh{K} = \frac{2}{N(N - 1)} \sum_{i < j} \{ S(X_i - X_j) \otimes S(X_i - X_j) \}. \nonumber
\end{align}
Let $\|\wh{K} - K\|_2 = \int \int_{\mathcal{T} \times \mathcal{T}} \{ \wh{K}(s, t) - K(s, t) \}^2 ds dt$ and $\|\wh{K} - K\|_{\infty} = \sup_{s, t \in \mathcal{T}} |\wh{K}(s, t) - K(s, t)|$.
Here $\wh{K}$ is an operator-valued U-statistics with kernel
\begin{align}
h(X_i, X_j) \triangleq S(X_i - X_j) \otimes S(X_i - X_j). \nonumber
\end{align}
It is easy to see that $Eh(X_i, X_j) = K$ and $\| h(X_i, X_j) \|_{\infty} \leq 1$. Then we make use of the asymptotic properties of operator-valued U-statistics discussed in \citep{sriperumbudur2017approximate} to derive our results.

Define $h_1(x_1) = E\{h(X_1, X_2) | X_1 = x_1\}$. Then we have the following decomposition
\begin{align}
\wh{K} - K = 2 \Big \{ \frac{1}{N} \sum_{i = 1}^N h_1(X_i) - K \Big \} + \Big [ \frac{1}{N(N - 1)} \sum_{i \neq j} \Big \{ h(X_i, X_j) - h_1(X_i) - h_1(X_j) + K \Big \} \Big ]. \nonumber
\end{align}
Thus,
\begin{align}\label{decomposition}
\| \wh{K} - K \|_2 \leq 2 \Big \| \frac{1}{N} \sum_{i = 1}^N h_1(X_i) - K \Big \|_2 + \Big \| \frac{1}{N(N - 1)} \sum_{i \neq j} \Big \{ h(X_i, X_j) - h_1(X_i) - h_1(X_j) + K \Big \} \Big \|_2.
\end{align}
For the first term, through Bernstein's inequality given in \citep{sriperumbudur2017approximate}, as $Eh_1(X_i) = K$ for $i = 1, \ldots, N$, there exists constants $C_1 > 0$ and $C_2 > 0$ such that for any $0 < \alpha < 1$,
\begin{align}
P \Bigg \{ \Big \| \frac{1}{N} \sum_{i = 1}^N h_1(X_i) - K \Big \|_2 \geq  \frac{C_1 \log \frac{2}{\alpha}}{N} + \sqrt{\frac{C_2 \log \frac{2}{\alpha}}{N}} \Bigg \} \leq \alpha. \nonumber
\end{align}
Thus we have
\begin{align}\label{part1}
\Big \| \frac{1}{N} \sum_{i = 1}^N h_1(X_i) - K \Big \|_2 = O_p(N^{-1/2}).
\end{align}
For the second term in (\ref{decomposition}), denote $h(X_i, X_j) - h_1(X_i) - h_1(X_j) + K$ as $W_{ij}$. Refer to \citep{sriperumbudur2017approximate}, there exists constant $C_3 > 0$ such that for any $\alpha > 0$,
\begin{align}\label{part2pro}
P \Big ( \Big \|\frac{1}{N(N - 1)} \sum_{i \neq j} W_{ij} \Big \|_2 \geq \frac{1}{N} \log \frac{C_3}{\alpha} \Big ) \leq \alpha.
\end{align}
That means
\begin{align}\label{part2}
\Big \| \frac{1}{N(N - 1)} \sum_{i \neq j} \Big \{ h(X_i, X_j) - h_1(X_i) - h_1(X_j) + K \Big \} \Big \|_2 = O_p(N^{-1}).
\end{align}
The deduction of (\ref{part2pro}) is analogous to that of (C.4) in \citep{sriperumbudur2017approximate}. Hence, we neglect the details here. Combining (\ref{decomposition}), (\ref{part1}) and (\ref{part2}), we have
\begin{align}
\| \wh{K} - K \|_2 = O_p(N^{-1/2}). \nonumber
\end{align}
As $\| \wh{K} - K \|_{\infty} \leq \| \wh{K} - K \|_2$, we then obtain (\ref{supKend}). The proof is completed.

\end{proof}

\begin{proof}[Proof of Theorem \ref{theoryeigfun}]

As the proof of Theorem 2 in \citep{yao2005functional} shown, the estimated eigenfunction shares the same convergence rate with the estimated Keandall's $\tau$ function. Therefore, according to the results in Theorem \ref{theoryKend}, Theorem \ref{theoryeigfun} is derived.

\end{proof}

\bibliographystyle{unsrtnat}
\bibliography{ref}

\end{document}